\documentclass[twoside,a4paper]{article}

\usepackage{amsmath,graphicx,amssymb,fancyhdr,amsthm,enumerate,textcomp,amscd} 
\usepackage[usenames]{color}
\usepackage[all]{xy}
\newtheorem{thm}{Theorem}

\newtheorem{prop}[thm]{Proposition}
\newtheorem{con}[thm]{Conjecture} 
\theoremstyle{definition} 
\newtheorem{defn}[thm]{Definition}
  
\theoremstyle{remark}  
  
\def\beq{\begin{eqnarray}}  
\def\eeq{\end{eqnarray}}  
\def\bsp{\begin{split}}  
\def\esp{\end{split}}

\def\d{\mathrm{d}}



\begin{document}   
   
\title{\Large\textbf{On a new class of infinitesimal group actions on pseudo-Riemannian manifolds}}  
\author{{\large\textbf{Sigbj\o rn Hervik }    }
 \vspace{0.3cm} \\     
Faculty of Science and Technology,\\     
 University of Stavanger,\\  N-4036 Stavanger, Norway         
\vspace{0.3cm} \\      
\texttt{sigbjorn.hervik@uis.no} }     
\date{\today}     
\maketitle   
\pagestyle{fancy}   
\fancyhead{} 
\fancyhead[EC]{S. Hervik}   
\fancyhead[EL,OR]{\thepage}   
\fancyhead[OC]{Nil-Killing vectors}   
\fancyfoot{} 

\begin{abstract}
Using the Lie derivative of the metric we define a class of Lie algebras of vector fields by generalising the concept of Killing vectors. As a Lie algebra they define locally a group action on the pseudo-Riemannian manifold through exponentiation. The  motivation behind studying these infinitesimal group actions is the investigation of $\mathcal{I}$-degenerate pseudo-Riemannian spaces, i.e., spaces having identical polynomial curvature invariants. In particular, we show that all the known  examples of $\mathcal{I}$-degenerate pseudo-Riemannian spaces possess such vector fields.
\end{abstract}

The author of this Letter has long been puzzled by the following simple example in four neutral dimensions \cite{SHII,HHY}: 
 \[ ds^2=2\d u(\d v+V\d u)+2\d U(\d V+bv^4\d U),\]
 where $b$ is a constant. 
This example arises as a limit of a more general class of metrics and possesses only 3 Killing vectors and, hence, is not locally homogeneous. However, in spite of being inhomogeneous, all polynomial curvature invariants are constants (CSI) \cite{CSI}. CSI spaces have proven relevant for, for example, solutions of massive gravity theories \cite{CPS,CPS2,SS} and as spacetimes carrying gravitational waves \cite{KPZK, BM}. We are therefore led to the question:  Is there an underlying mathematical structure that ensures that this space is CSI?

In the Riemannian case, a space is CSI if and only if the space is locally homogeneous \cite{PTV}. The Lorentzian case, on the other hand, possesses CSI examples which do not have any Killing vectors and hence the CSI property cannot be explained using Killing vectors alone. The above example in four neutral dimensions is also such an example. 

In this Letter we will present a new concept and  mathematical structure which turns up to fully explain the CSI property of the above metric. This new concept has wider consequences as well, giving new insights into the structure of spaces having identical polynomial curvature invariants.  
 

Consider therefore the set, $\mathcal{I}$, of \emph{scalar polynomial curvature invariants} of a pseudo-Riemannian spacetime $(M,g)$. 
This set is finitely generated \cite{GW} using full contractions of the Riemann tensors and its covariant derivatives up to some finite order $k$. Thus, is sufficient to consider a finite set: 
\[ \mathcal{I}=\left\{ R, R_{abcd}R^{abcd},...,R_{abcd;e}R^{abcd;e},...\right\}\] 
Each polynomial invariant, $I\in\mathcal{I}$, is a smooth function, $I\in \Lambda^0(M)$, on the manifold; i.e., $I:M\rightarrow \mathbb{R}$, and, hence, the set of invariants $\mathcal{I}$, can be viewed as a function $\mathfrak{I}:M\rightarrow \mathbb{R}^N$, for some finite $N$. 

\begin{defn}
Given a pseudo-Riemannian space $(M,g)$. 
Then, an $\mathcal{I}$-preserving diffeomorphism (IPD), $\phi$, is a diffeomorphism $\phi:M\rightarrow M$ such that $\phi^*I=I$ for all $I\in \mathcal{I}$. 
\end{defn}
An IPD therefore compares the value of all invariants at different points since $\phi^*I=I\circ\phi$ as functions on $M$. Clearly, if the space possesses many such IPDs then the value of the invariants are identical for a large subset of $M$. It is obvious that such IPDs form a (pseudo-)group structure. 

We will particularly be interested in the case with there are one-parameter families, $\phi_t$, of IPDs so that $\phi_t^*I=I$ for $t\in\mathbb{R}$. Let $X$ be the vector field generating  $\phi_t$ so that $\phi_t=\exp(tX)$. Hence,
\[ \pounds_X I=\lim_{t\rightarrow 0} \frac 1t(\phi^* I-I)=0. \] 
One set of vector fields where this is satisfied for all $I$ is of course the set of Killing vector fields. The Lie derivative of the metric with respect to $X$ is zero iff $X$ is Killing: 
\[ \pounds _Xg_{ab}=0 \quad \Leftrightarrow \quad X \text{ is Killing.}\] 
In this case $\phi_t=\exp(tX)$ are (local) isometries. Clearly, there may be many IPDs that are not generated by Killing vectors, as an example, consider flat space for which all invariants are zero. Then any smooth $\phi$ would do since $\phi^*I$ is the zero-function.  On the other hand, if there is a transitive set of vectors so that $\mathrm{span}\{X_i\}|_p=T_pM$, where each $X_i$ generates an IPD, then the space is CSI in, at least, a neighbourhood of $p\in M$. Moreover, if the space is CSI then we would expect to find (locally) a transitive set of IPDs. 

It is therefore advantageous to generalise the Killing equation for isometries to get a local equation ensuring the existence of IPDs.  To this end we define \emph{Nil-Killing} vector fields as follows:
\begin{defn}
A vector field $X$ on $M$ for which $\pounds_{X}g_{ab}=n_{ab}$, where $n_{ab}$ as an operator ${\sf N}_X=(n^a_{~b}): TM\rightarrow TM$ is nilpotent, will be called a \emph{Nil-Killing field}. 
\end{defn}

In an index-free notation we define the operator ${\sf N}_X: TM\rightarrow TM$ by 
\[ g({\sf N}_X(Y),Z)=(\pounds_{X}g)(Y,Z), \quad \forall Y,Z \in\mathfrak{X}(M).\] 
The idea is that such vector fields are Killing vector fields as far as the polynomial invariants of $n_{ab}$ are concerned (i.e., all eigenvalues of ${\sf N}_X$ are zero). In particular, Killing implies Nil-Killing.

The set of Nil-Killing vectors does not necessarily form a (finite dimensional) Lie algebra. In subsequent papers we will study conditions when this happens. However, it is still useful to define a \emph{Nil-Killing Lie algebra}:  
\begin{defn}[Nil-Killing Lie algebra]
A {\it Nil-Killing Lie algebra} $\mathcal{N}$ is a set of smooth vector fields on $M$ fulfilling: 
\begin{enumerate}
\item{}  $\mathcal{N}$ forms a Lie algebra. 
\item{}  For all $X\in \mathcal{N}$, $X$ is Nil-Killing. 
\end{enumerate}
\end{defn}
The Lie algebra requirement implies that such vector fields generate (at least locally) a group action on the manifold through exponentiation. It is not clear under what conditions they generate IPDs, however interestingly, in \emph{almost} all examples\footnote{The only example where it does not necessarily generate an IPD is the Kundt case when it is not degenerate Kundt \cite{kundt}, see later.} given in this Letter, they do. 

As the approach here is a local approach, we will henceforth ignore all global issues. Therefore, the manifold $M$ is assumed to be  a sufficiently small neighbourhood, if necessary. 

Let us consider various examples in different signatures. In the positive-definite Riemannian case, Nil-Killing implies Killing. So all Riemannian examples are Killing vectors and hence, they generate (local) isometries. We will therefore consider first the Lorentzian case where non-trivial Nil-Killing vectors exist and seem to play an interesting r\^ole for a certain class of metrics. 
\par{Note:} In an early paper by Coll, Hildebrandt and Senovilla\cite{Sen1}, a class of symmetries preserving the Lorentzian Kerr-Schild metrics are defined which seems to be examples of  Nil-Killing vectors. Indeed, these were later used in \cite{Sen2} to establish the non-existence of trapping horizons in spacetimes with vanishing curvature invariants. 
\paragraph{Lorentzian case.}
The generic Lorentzian spacetime does not seem to allow for a Nil-Killing vector field, but there is an interesting class which does. 
\subparagraph{Kundt spacetimes} 
A Kundt spacetime is defined as a spacetime possessing a null-vector field $\ell$, which is non-expanding, non-shearing, and twistfree \cite{CSI,kundt}. This implies that we can write (using a null-frame $\{ \ell_a, n_a, m^{(i)}_{a}\}$, $i=1,...,n-2$): 
\[ \ell_{(a;b)}=L\ell_a\ell_b+\tau_i\ell_{(a}m^{(i)}_{b)}.\] 
Consequently, $\pounds_{\ell}g_{ab}=\ell_{(a;b)}$ is nilpotent as an operator. 
\begin{prop}
A Kundt spacetime possesses a null-vector field which is Nil-Killing. 
\end{prop}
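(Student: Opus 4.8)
The plan is to show that the null vector field $\ell$ defining the Kundt structure is itself (up to a choice of normalization) a Nil-Killing vector field, so the statement is essentially already set up by the computation preceding it. First I would recall that by definition of a Kundt spacetime the optical scalars of $\ell$ vanish, which is exactly the content of the displayed equation
\[ \ell_{(a;b)}=L\ell_a\ell_b+\tau_i\ell_{(a}m^{(i)}_{b)},\]
expressing that $\ell_{(a;b)}$ has no $n_a n_b$, $n_{(a}m^{(i)}_{b)}$, or $m^{(i)}_{(a}m^{(j)}_{b)}$ components. Since $\pounds_{\ell}g_{ab}=2\ell_{(a;b)}$ (torsion-free metric connection), we have $\pounds_{\ell}g_{ab}=2L\ell_a\ell_b+2\tau_i\ell_{(a}m^{(i)}_{b)}$, so it remains only to check that the associated operator ${\sf N}_{\ell}$, defined by $g({\sf N}_{\ell}(Y),Z)=(\pounds_{\ell}g)(Y,Z)$, is nilpotent.

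The key step is to exhibit ${\sf N}_{\ell}$ explicitly in the null frame $\{\ell_a,n_a,m^{(i)}_a\}$ and verify nilpotency directly. Raising an index on $2L\ell_a\ell_b+2\tau_i\ell_{(a}m^{(i)}_{b)}$ using the inverse metric (whose only nonvanishing frame components pair $\ell$ with $n$ and $m^{(i)}$ with $m^{(i)}$), one finds that ${\sf N}_{\ell}$ maps $\ell \mapsto 0$, maps each $m^{(i)}$ into the line spanned by $\ell$, and maps $n$ into the span of $\ell$ and the $m^{(i)}$; in particular the image of ${\sf N}_{\ell}$ lies in the degenerate subspace $\ell^{\perp}=\mathrm{span}\{\ell,m^{(1)},\dots,m^{(n-2)}\}$, and ${\sf N}_{\ell}$ restricted to that subspace has image in $\mathrm{span}\{\ell\}\subseteq\ker{\sf N}_{\ell}$. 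Hence ${\sf N}_{\ell}^{2}$ already vanishes on all of $TM$ except possibly via the $n$-component, and a further application shows ${\sf N}_{\ell}^{3}=0$. Concretely, with respect to the ordered frame $(\ell,m^{(i)},n)$ the matrix of ${\sf N}_{\ell}$ is strictly upper triangular, which makes nilpotency manifest; I would present this as a short block-matrix computation rather than index gymnastics. Since all eigenvalues of ${\sf N}_{\ell}$ are then zero, $\ell$ is Nil-Killing by definition, proving the proposition.

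I do not expect a genuine obstacle here — the result is almost immediate from the defining property of Kundt geometry — but the one point that needs care is the bookkeeping of index raising with a metric that is off-diagonal in the null frame, since this is where a careless computation could spuriously produce a nonzero eigenvalue. The cleanest way to avoid this is to work invariantly: note that $\ell$ is null and geodesic (the $L\ell_a\ell_b$ term together with $\ell^a\ell_{(a;b)}\propto \ell_b$ gives $\ell^a\nabla_a\ell^b\propto\ell^b$, and affine parametrization can be arranged), so $\ell\in\ker{\sf N}_{\ell}$ and $\mathrm{Im}\,{\sf N}_{\ell}\subseteq\ell^{\perp}$; then observe $({\sf N}_{\ell})|_{\ell^{\perp}}$ has image in $\mathrm{span}\{\ell\}$ because the $\tau_i\ell_{(a}m^{(i)}_{b)}$ term contributes only along $\ell$ when one index is contracted against a vector in $\ell^{\perp}$. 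The chain $TM \supseteq \ell^{\perp} \supseteq \mathrm{span}\{\ell\} \supseteq \{0\}$ is then ${\sf N}_{\ell}$-invariant with ${\sf N}_{\ell}$ shifting each step down, giving ${\sf N}_{\ell}^{3}=0$ and hence nilpotency without ever choosing coordinates.
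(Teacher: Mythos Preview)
Your proposal is correct and follows essentially the same approach as the paper: both take the defining null vector $\ell$ of the Kundt structure, use the form $\ell_{(a;b)}=L\ell_a\ell_b+\tau_i\ell_{(a}m^{(i)}_{b)}$, and conclude that the associated operator is nilpotent. The paper simply asserts nilpotency in one line, whereas you spell out the upper-triangular/flag argument explicitly; this is elaboration of the same idea rather than a different route.
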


\subparagraph{VSI spacetimes} 
Spacetimes for which all polynomial curvature invariants vanish are called  \emph{VSI spacetimes} \cite{VSI,Higher, CFHP}, and they can be written 
\beq\label{VSI}
ds^2=2\d u(\d v+H\d u+W_{i}\d x^i)+\delta_{ij}\d x^i\d x^j, \quad i=1,...,n-2,
\eeq
where 
\beq
H&=&\epsilon\frac{v^2}{2(x^1)^2}+vH^{(1)}(u,x^k)+H^{(0)}(u,x^k), \\
W_1&=& -\epsilon\frac{2v}{x^1}, \qquad
W_{i}= W^{(0)}_i(u,x^k), \quad i\neq 1,
\eeq
and $\epsilon=0,1$. 
\begin{thm}
Assume that an $n$-dimensional Lorentzian space has all vanishing curvature invariants (VSI). Then there exists a Nil-Killing Lie algebra $\mathcal{N}$ which is transitive; i.e.,  $\mathrm{dim}(\mathcal N|_p)=n$ for all $p\in M$. 
\end{thm}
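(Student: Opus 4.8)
The plan is to exhibit the coordinate vector fields of the canonical VSI metric \eqref{VSI} as a transitive abelian Nil-Killing Lie algebra. Working in the chart $(u,v,x^1,\dots,x^{n-2})$, the first step is the observation that drives everything: in \eqref{VSI} the only metric components that fail to be constant are $g_{uu}=2H$ and $g_{ui}=W_i$, while $g_{uv}=1$, $g_{ij}=\delta_{ij}$, $g_{vv}=0$, $g_{vi}=0$ are all constant. (The $\epsilon$-terms live entirely inside $H$ and $W_1$, so they do not spoil this.) Hence for each coordinate vector field $\partial_\mu$, $\mu\in\{u,v,1,\dots,n-2\}$, the Lie derivative $\pounds_{\partial_\mu}g_{ab}=\partial_\mu g_{ab}$ is a symmetric tensor $n_{ab}$ whose only possibly non-zero components are $n_{uu}$ and $n_{ui}=n_{iu}$.

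Second, I would show that every such tensor yields a nilpotent operator ${\sf N}$. Using the inverse metric $g^{uv}=1$, $g^{uu}=g^{ui}=0$, $g^{vv}=\delta^{ij}W_iW_j-2H$, $g^{vi}=-\delta^{ij}W_j$, $g^{ij}=\delta^{ij}$, one computes ${\sf N}^a_{~b}=g^{ac}n_{cb}$ for $n_{ab}$ supported on the $\{uu,ui\}$ components: the $u$-row vanishes because $g^{uc}n_{cb}=n_{vb}=0$; and ${\sf N}$ sends $\partial_v\mapsto 0$, $\partial_{x^j}\mapsto n_{uj}\,\partial_v$, and $\partial_u\mapsto (n_{uu}-W_in_{ui})\partial_v+\sum_k n_{uk}\,\partial_{x^k}$. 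Therefore ${\sf N}^2$ maps everything into $\mathrm{span}\{\partial_v\}$ and ${\sf N}^3=0$, so ${\sf N}_{\partial_\mu}$ is (at most $3$-step) nilpotent and $\partial_\mu$ is Nil-Killing. By $\mathbb{R}$-linearity of $\pounds_{(\cdot)}g$ in constant-coefficient combinations, every $X=\sum_\mu c^\mu\partial_\mu$ with $c^\mu\in\mathbb{R}$ has Lie derivative again supported on $\{uu,ui\}$, hence is Nil-Killing.

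Third, I would assemble the algebra: put $\mathcal{N}=\mathrm{span}_{\mathbb{R}}\{\partial_u,\partial_v,\partial_1,\dots,\partial_{n-2}\}$. Since the coordinate vector fields pairwise commute, $\mathcal{N}$ is an $n$-dimensional abelian Lie algebra, and by the previous step each of its elements is Nil-Killing, so $\mathcal{N}$ is a Nil-Killing Lie algebra. At any $p\in M$, the $\partial_\mu|_p$ form the coordinate basis of $T_pM$, so $\mathcal{N}|_p=T_pM$ and $\dim(\mathcal{N}|_p)=n$; i.e., $\mathcal{N}$ is transitive. One may add that, since all curvature invariants vanish identically on a VSI space, the flows $\exp(t\partial_\mu)$ are automatically IPDs.

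There is no serious obstacle here; the only point needing care is the second step, where nilpotency must be checked for ${\sf N}_X$ \emph{as an operator} on $TM$ (indices raised with the Kundt inverse metric), not merely read off from the degenerate-looking array $n_{ab}$ — so one should exhibit ${\sf N}_X^3=0$ explicitly rather than argue heuristically. It is worth flagging that the argument is entirely uniform in $\epsilon$ and never uses the detailed form of $H^{(0)},H^{(1)},W_i^{(0)}$: it is exactly the rigid ``$uu$/$ui$ only'' shape of a Kundt metric in these coordinates that forces the Lie derivatives along the coordinate directions to be nilpotent, so the same computation in fact reproves the preceding Proposition for $\ell=\partial_v$ as a special case.
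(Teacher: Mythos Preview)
Your argument is correct, and it is actually cleaner than the paper's own proof. The paper treats the two cases $\epsilon=0$ and $\epsilon=1$ separately: for $\epsilon=0$ it uses exactly your abelian algebra $\{\partial_u,\partial_v,\partial_{x^1},\dots,\partial_{x^{n-2}}\}$, but for $\epsilon=1$ it replaces $\partial_u,\partial_v,\partial_{x^1}$ by three rather intricate, explicitly constructed commuting vector fields $\xi_1,\xi_2,\xi_3$ (nonlinear in $u,v,x^1$) and then checks by hand that the resulting Lie derivatives have the form $2\,\d u\,[f\,\d u+k_i\,\d x^i]$. Your observation that $g_{uv},g_{vv},g_{vi},g_{ij}$ are constant in \eqref{VSI} for \emph{both} values of $\epsilon$ shows that this case split is unnecessary for the theorem as stated: the coordinate frame already furnishes a transitive abelian Nil-Killing algebra uniformly. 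What the paper's construction buys is a set of vector fields whose Lie derivatives are of the specific shape $\propto \d u\otimes(\cdots)$ even when $\epsilon=1$ (your $\pounds_{\partial_{x^1}}g$ picks up a nonzero $n_{u1}$ term from $\partial_{x^1}W_1=2v/(x^1)^2$, which the $\xi_A$'s are engineered to avoid), but that refinement is not needed for nilpotency, as your ${\sf N}^3=0$ computation shows. Your explicit index-raising step is also an improvement in rigor over the paper's ``as can be seen since it has only zero-eigenvalues'' remark.
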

\begin{proof}
There are two cases to consider, $\epsilon=0,1$. The set $\mathcal{N}$ can be given using the basis vectors (for both cases the set given is Abelian): 
\begin{description}
\item[$\epsilon=0$]: $\left\{ \partial_u,\partial_v,\partial_{x^1},\partial_{x^2},...,\partial_{x^{n-2}}\right\}$. 
\item[$\epsilon=1$]: $\left\{ \xi_1,\xi_2,\xi_3,\partial_{x^2},...,\partial_{x^{n-2}}\right\}$, 
\end{description} 
where 
\beq
\xi_1 &=& \frac{u(2(x^1)^2-uv)}{(x^1)^2}\partial_{x^1}-\frac {u^2}{x^1}\partial_u+\frac{(2(x^1)^4+2uv(x^1)^2-u^2v^2)}{(x^1)^3}\partial_v. \nonumber\\
\xi_2 &=& \frac{v}{(x^1)^2}\partial_{x^1}+\frac 1{x^1}\partial_u+\frac{v^2}{(x^1)^3}\partial_v. \nonumber \\
\xi_3 &=&\frac{(x^1)^2-uv}{(x^1)^2}\partial_{x^1}-\frac u{x^1}\partial_u+\frac{v((x^1)^2-uv)}{(x^1)^3}\partial_v. 
\eeq

Let us consider an arbitrary point $p$ given by $(u,v,x^k)=(u_0,v_0,x^k_0)$. Then clearly the $\epsilon=0$ case the vector fields all generate translations. By using the definition of the Lie derivative: 
\beq
\pounds_{\partial_{v}}g&=&2H^{(1)}\d u\d u,\\
 \pounds_{\partial_{\mu}}g&=&2\d u\left[(v\partial_{\mu}H^{(1)}+\partial_{\mu}H^{(0)})\d u+\partial_{\mu}W^{(0)}_j\d x^j\right],~~ \mu=u,x^i.
\eeq
The corresponding operator $n^a_{~b}$ is nilpotent (as can be seen since it has only zero-eigenvalues). 

For $\epsilon=1$ we need to use the $\xi_A$'s, $a=1,2,3$,  as well. Then, by computing the Lie derivatives, they are of the form: 
\[ \pounds_{\xi_A}g=2\d u\left[f_{(A)}(u,v,x^k)\d u+k_{(A)i}(u,v,x^k)\d x^i\right],
\]
which is again nilpotent as an operator. 
\end{proof}
In this case, both  Nil-Killing Lie algebras are Abelian. However, it should be noted that there exist larger Nil-Killing Lie algebras for these VSI spaces. These larger Nil-Killing Lie algebras are not Abelian, they can be solvable, semisimple, or even have non-trivial Levi-decomposition.

For spacetimes for which all polynomial curvature invariants are constants \cite{CSI}, all known examples seem to possess such a set. We therefore believe that: 
\begin{con}
\label{con:CSI}
Assume that an $n$-dimensional Lorentzian space has all constant curvature invariants (CSI). Then there exists a set $\mathcal{N}$ which is transitive; i.e., $\dim (\mathcal N|_p)=n$ for all $p\in M$. 
\end{con}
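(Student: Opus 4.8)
Since this is stated as a conjecture, what follows is a strategy rather than a complete argument. The plan is to reduce to the structure theory of Lorentzian CSI spaces and then to build $\mathcal{N}$ by lifting the symmetries of an associated locally homogeneous ``seed'' metric to the full spacetime.

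By the structure theory of Lorentzian CSI spacetimes \cite{CSI}, such a space is either locally homogeneous or degenerate Kundt. In the locally homogeneous case there is nothing to do: the isometry algebra is transitive, every Killing vector is Nil-Killing, and the Killing vectors form a Lie algebra, so take $\mathcal{N}$ to be the full Killing algebra. Assume, then, that $(M,g)$ is degenerate Kundt and write it in adapted Kundt form
\[
ds^2 = 2\d u\bigl(\d v + H\,\d u + W_i\,\d x^i\bigr) + g^{\perp}_{ij}(v,x^k)\,\d x^i\d x^j,
\]
with null Kundt vector $\ell=\partial_v$. The degenerate Kundt and CSI conditions together constrain the $v$-dependence of $g^\perp, H, W_i$ so tightly that, modulo a coordinate change, the seed $g_0 = 2\d u\,\d v + g^{\perp}_{ij}(x^k)\,\d x^i\d x^j$ may be taken to be CSI and locally homogeneous, with a transitive Killing algebra $\mathfrak{k}_0$, while $g - g_0$ is built only from terms $2\d u\,\d u$ and $2\d u\,\d x^i$; this is the ``null deformation'' construction of CSI Kundt metrics from homogeneous ones in \cite{CSI}.

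Nilpotency will then come for free. If $X$ satisfies $\pounds_X g = 2\d u\bigl(f\,\d u + k_i\,\d x^i\bigr)$ for some functions $f, k_i$, then because $\ell=\partial_v$ is null and the one-form $f\,\d u + k_i\,\d x^i$ annihilates $\ell$, a direct computation gives ${\sf N}_X^3=0$; in particular ${\sf N}_X$ has only zero eigenvalues, so $X$ is Nil-Killing. It therefore suffices to exhibit a transitive set of vector fields whose Lie derivatives of $g$ all have this ``half-null'' form. I would use: (i) $\partial_v$, with $\pounds_{\partial_v}g = 2H_{,v}\,\d u\,\d u$; (ii) a corrected translation $\partial_u + a\,\partial_v + b^i\partial_{x^i}$ with $a, b^i$ chosen to remove any $\d v$-component from its Lie derivative; and (iii) for each $Y\in\mathfrak{k}_0$, a Kundt lift $\widetilde Y = Y + a_Y\,\partial_v + b_Y^i\partial_{x^i}$, again with $a_Y, b_Y^i$ chosen so that $\pounds_{\widetilde Y}g$ has no $\d v$-component. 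Since $\pounds_Y g_0=0$, the defect $\pounds_Y(g-g_0)$ is already half-null, and the equations for $a_Y, b_Y^i$ form a first-order linear system along the Kundt congruence; solvability on a small neighbourhood should follow from the compatibility of $H$ and $W_i$ with the seed symmetry up to the freedom generated by $\partial_v$. Transitivity is then immediate, $\mathfrak{k}_0$ being transitive on the transverse directions and $\partial_v, \partial_u$ covering the rest. The vectors $\partial_v, \partial_{x^\mu}, \xi_1, \xi_2, \xi_3$ of the VSI proof above realise exactly this pattern in the case of a flat seed.

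The main obstacle, and the reason the statement is a conjecture, is twofold. First, the reduction step is itself the open CSI conjecture of \cite{CSI}: that every Lorentzian CSI spacetime is locally homogeneous or a null deformation of a homogeneous seed is not known in general. Second, even granting it, one must show --- \emph{uniformly over all admissible deformations} --- that the lifts $\widetilde Y$ always exist, and, if one wants $\mathcal{N}$ to be a genuine Lie algebra rather than just a transitive set, that $\{\partial_v,\ \partial_u + \cdots,\ \widetilde Y\}$ closes under the bracket into a finite-dimensional algebra. This last point is a cohomological-type condition on $H, W_i$ relative to the action of $\mathfrak{k}_0$, and I expect it to be the crux: it holds in all worked examples, but a general proof would appear to require classification-level control of CSI Kundt geometries.
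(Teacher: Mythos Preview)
The paper offers no proof of this statement: it is explicitly labelled a \emph{conjecture}, motivated only by the observation that all known Lorentzian CSI examples possess a transitive Nil-Killing Lie algebra. You have correctly recognised this and, appropriately, presented a strategy rather than a proof.

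Your outline goes well beyond anything the paper attempts and is broadly sensible: reduce via the CSI dichotomy to the degenerate Kundt case, then lift the Killing algebra of a locally homogeneous seed to Nil-Killing vectors of the full metric by correcting along $\partial_v$. You are also right to flag the two genuine obstructions---the dichotomy itself is the open CSI conjecture of \cite{CSI} (proved only in low dimensions), and closure of the lifted set under brackets is not automatic. One small slip: in your Kundt form you wrote $g^{\perp}_{ij}(v,x^k)$, but the Kundt conditions already force the transverse metric to be independent of $v$; it should be $g^{\perp}_{ij}(u,x^k)$. This does not affect your argument, since you immediately pass to the degenerate case where the seed has $g^{\perp}_{ij}(x^k)$ anyway.
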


\paragraph{Other signatures.} 
Consider the following class in four dimensions of neutral signature \cite{HHY}: 
\beq
ds^2&=&2\d u\big[\d v+(aV+H(u,U))\d u\big]+2dU\big[dV+P(v,u,U)\d U\big], \nonumber\\
P(v,u,U) &=& bv^4+v^3G_3(u,U)+v^2G_2(u,U)+vG_1(u,U)+G_0(u,U),
\eeq
 where $a$ and $b$ are constants. Note that the example in the introduction is a member of this class. 

This class is CSI as well, but does not, in general, possess any Killing vectors. However, the following span a set $\mathcal{N}$: 
\[ \mathcal{N}=\mathrm{span}\{\partial_u,\partial_v,\partial_U,\partial_V\}. \] 
Computing the Lie derivatives: 
\beq
{\pounds_{\partial_i}g}&=&2\partial_iH(u,U)\d u^2+2\partial_iP(v,u,U)\d U^2, \quad  i=u,U,
\\ 
\pounds_{\partial_v}g&=& 2\left[4bv^3+3v^2G_3(u,U)+2vG_2(u,U)+G_1(u,U)\right]\d U^2, \\
\pounds_{\partial_V}g&=& 2a\d u^2.
\eeq
To see that all of these tensors are nilpotent as operators, we first define the one-parameter group of diffeomorphisms: 
\beq \label{diffboost}\phi_t: (u,v,U,V)\mapsto (e^{-t}u,e^{t}v,e^{-2t}U,e^{2t}V).\eeq 
We note that this map leaves the origin fixed and induces an element in $SO(2,2)$ with respect to the origin. When we evaluate the limit ($n_{ab}$ is any of the above Lie derivatives): 
\[ \lim_{t\rightarrow \infty} \phi_t^*n_{ab}=0, \] 
implying that all polynomial invariants of $n_{ab}$ are zero at the origin, hence the characteristic equation is trivial and $n_{ab}$ is nilpotent at the origin. If we consider a different point $p=(u_0,v_0,U_0,V_0)$, we perform a change of variables: 
\[ (\tilde{u},\tilde{v},\tilde{U},\tilde{V})=(u-u_0,v-v_0,U-U_0,V-V_0)\]
so that $p$ is at the origin of the tilded variables. However, the form of the Lie derivatives remains the same so that by using the transformation $\phi_t$ we get the conclusion that $n_{ab}$ is nilpotent at $p$ as well. Hence, $n_{ab}$ is everywhere nilpotent. 

These vectors (or rather, the diffeomorphisms they generate) can be used to show that this is indeed a set of CSI spacetimes since they all generate IPDs. 

We note that the set $\mathcal{N}$ is in this case abelian since all vector fields commute. However, there is a larger $\widehat{\mathcal{N}}\supset \mathcal{N}$, given by the additional vector field 
\[ \xi_5=-u\partial_u+v\partial_v-2U\partial_U+2V\partial_{V}.\]
 We note that $\xi_5$ generates the diffeomorphism $\phi_t$ given in eq.(\ref{diffboost}). Indeed, computing the Lie derivative: 
\[ \pounds_{\xi_5}g=-2\d u^2\left(u\partial_u+U\partial_U+2\right)H(u,U)-2\d U^2\sum_{n=0}^3v^n\left(4-n+u\partial_u+U\partial_U\right)G_n(u,U).\]
Hence, 
\[ \widehat{\mathcal{N}}=\mathrm{span}\{\partial_u,\partial_v,\partial_U,\partial_V,\xi_5\},\] 
is a (solvable) Lie algebra whose Lie derivatives of the metric $g$ are all nilpotent as operators. 

All known examples of CSI spaces seem to possess such a transitive set. Therefore, \emph{we believe that Conjecture \ref{con:CSI} is valid for arbitrary pseudo-Riemannian manifolds. }

\subparagraph{$\mathcal{I}$-degenerate pseudo-Riemannian manifolds. }
In \cite{HHY} we constructed a general class of $\mathcal{I}$-degenerate pseudo-Riemannian manifolds. This class contains all known examples of $\mathcal{I}$-degenerate pseudo-Riemannian manifolds; i.e. metrics allowing for a continuous metric deformation having identical polynomial curvature invariants. 

We will briefly review the metrics. Let $P(v_1,v_2,...,v_k)$ be a polynomial in the $v_i$'s with coefficients being arbitrary functions of $(u^i, x^a)$. Define $\mathcal{P}$ as  the ring of all such polynomials: 
\[ \mathcal{P}:=\left\{P(v_1,v_2,...,v_k) ~~| ~~ P ~ \text{polynomial,  coefficients depend on}~ (u^i, x^a)\right\} \]
We define subsets of this set and indicate them with a bracket $[-,..,-]$. The bracket consists of a list of monomials in $v_i$'s and indicates the highest allowable possible power of the $v_i$'s. For example, $[v_1^3,v_2v_3^5]\subset \mathcal{P}$ is the subset including the following powers: $v_1^n, n=0,...,3$, and $v_2^mv_3^q$, $m=0,1$ and $q=0,...,5$. 

The metrics can now be written 
\beq\label{result}
g=2du^i\left(a_{ij}dv^j+A_{ij}du^j+B_{ia}dx^a\right)+g_{ab}dx^adx^b,
\eeq
where $g_{ab}=g_{ab}(u^i,x^a)$ and $a_{ij}$,  $A_{ij}$, and $B_{ia}$, are polynomials belonging to some finite  subset $[-,...,-]$ of $\mathcal{P}$, with  arbitrary smooth coefficients in $(u^i,x^a)$. See \cite{HHY} for the construction and the conditions these polynomial functions obey. 

It is now straight-forward to see that the following set is a Nil-Killing Lie algebra: 
\[ \mathcal{N}=\mathrm{span}\{\partial_{v_1},\partial_{v_2},...,\partial_{v_k}\}.\] 
Clearly, each $\partial_{v_i}$ generates an IPD: 
\[ \phi_\lambda=\exp(\lambda \partial_{v_i}),\] 
which is simply a translation $v_i\mapsto v_i+\lambda$. This set can thus be of dimension equal to the real rank of the pseudo-orthogonal group $O(p,q)$ (the real rank is equal to ${\rm min}(p,q)$.) 

\paragraph{Open questions. } Many questions remain, some of them are:
\begin{enumerate}
\item 	{Under what conditions does the set of Nil-Killing vectors form a Lie algebra?} 
\item Does a generic space allow for a Nil-Killing vector?  
\item Does the existence of a non-Killing Nil-Killing vector imply an $\mathcal{I}$-degenerate space? 
\item Given two $\mathcal{I}$-degenerate metrics with identical invariants, do they have identical maximal Lie algebras $\mathcal{N}$?    
\item Under what conditions do Nil-Killing vector fields generate IPDs? 
\end{enumerate}
In future work we will investigate some of these questions.

\paragraph{Acknowledgments.} 
This work was supported from the Research Council of Norway, Toppforsk
grant no. 250367: \emph{Pseudo-Riemannian Geometry and Polynomial Curvature Invariants:
Classification, Characterisation and Applications}.



\begin{thebibliography}{ab}   
  

\bibitem{SHII}
 S.~Hervik,
  Class.\ Quant.\ Grav.\  {\bf 29}, 095011 (2012)


\bibitem{HHY}
S.~Hervik, A.~Haarr and K.~Yamamoto,
  J.\ Geom.\ Phys.\  {\bf 98}, 384 (2015)

\bibitem{CSI}    A. Coley, S. Hervik and N.
Pelavas,  {\it Class. Quant. Grav.} {\bf 23} (2006) 3053.

\bibitem{CPS} 
  D.~D.~K.~Chow, C.~N.~Pope and E.~Sezgin,
 {\it  Class.\ Quant.\ Grav.\ }  {\bf 27}, 105001 (2010)

\bibitem{CPS2} 
  D.~D.~K.~Chow, C.~N.~Pope and E.~Sezgin,
  {\it Class.\ Quant.\ Grav.\ }  {\bf 27}, 105002 (2010)

\bibitem{SS} 
  K.~Siampos and P.~Spindel,
  Class.\ Quant.\ Grav.\  {\bf 30}, 145014 (2013)

\bibitem{KPZK} 
  P.~Krtous, J.~Podolsky, A.~Zelnikov and H.~Kadlecova,
  Phys.\ Rev.\ D {\bf 86}, 044039 (2012)

\bibitem{BM} 
  X.~Bekaert and K.~Morand,
  Phys.\ Rev.\ D {\bf 88}, no. 6, 063008 (2013)

\bibitem{PTV}
F. Pr\"ufer, F. Tricerri and L. Vanhecke, 
{\it Trans. Amer. Math. Soc.} {\bf 348}, No. 11, 4643 (1996)

\bibitem{GW}
R. Goodman and N. R. Wallach, {\it Symmetry, representations and invariants.} Springer, New York,
2009.

\bibitem{Sen1}
 B.~Coll, S.~R.~Hildebrandt and J.~M.~M.~Senovilla,
 {\it Gen.\ Rel.\ Grav.}  {\bf 33}, 649 (2001)

\bibitem{Sen2}
J.~M.~M.~Senovilla,
  {\it JHEP} {\bf 0311}, 046 (2003)

\bibitem{kundt}
A. Coley, S. Hervik, G. Papadopoulos and N. Pelavas, 2009, Class. Quant. Grav. {\bf 26}, 105016    
[arXiv:0901.0394];    
\bibitem{VSI}
  V.~Pravda, A.~Pravdova, A.~Coley and R.~Milson,
{\it Class. Quant. Grav. } {\bf 19} (2002) 6213


\bibitem{Higher}    A. Coley, R. Milson, V. Pravda and A. Pravdova, 
{\it Class. Quant. Grav.} {\bf 21} (2004) 5519.

\bibitem{CFHP} A. Coley, A. Fuster, S. Hervik and N. Pelavas,  \textit{Class. Quant. Grav.} \textbf{23} (2006) 7431-7444



\bibitem{degen}   
A. Coley, S. Hervik and N.    
Pelavas, 2009, Class. Quant. Grav. {\bf 26}, 025013    
[arXiv:0901.0791];  

\end{thebibliography}
\end{document}